\newcommand{\R}{\mathbb{R}}
\renewcommand{\S}{\mathbb{S}}    
\newcommand\ie{\emph{i.e.}}
\newtheorem{remark}{Remark}
\newtheorem{definition}{Definition}
\newtheorem{theorem}{Theorem}
\newtheorem{lemma}{Lemma}
\newtheorem{corollary}{Corollary}
\def\be{\begin{equation}}
\def\ee{\end{equation}}
\def\ba{\begin{array}}
\def\ea{\end{array}}
\def\eqa{\begin{eqnarray}}
\def\eqe{\end{eqnarray}}
\title{\LARGE \bf Stochastic phase-cohesiveness of discrete-time Kuramoto\\oscillators in a frequency-dependent tree network}
\author{Matin Jafarian, Mohammad H. Mamduhi, Karl H. Johansson
\thanks{The authors are with the Division of Decision and Control Systems, School of Electrical Engineering and Computer Science, KTH Royal Institute of Technology, Stockholm, Sweden. Email: {\tt\small $\{$matinj,mamduhi,kallej$\}$@kth.se.}}
\thanks{This work was supported by the Knut and Alice Wallenberg Foundation, the Swedish Strategic Research Foundation and the Swedish Research Council.}}
\begin{document}
\maketitle
\thispagestyle{empty}
\pagestyle{empty}
\begin{abstract}
This paper presents the notion of {\em stochastic phase-cohesiveness} based on the concept of recurrent Markov chains and studies the conditions under which a discrete-time stochastic Kuramoto model is phase-cohesive. It is assumed that the exogenous frequencies of the oscillators are combined with random variables representing uncertainties. A bidirectional tree network is considered such that each oscillator is coupled to its neighbors with a coupling law which depends on its own noisy exogenous frequency. In addition, an undirected tree network is studied. For both cases, a sufficient condition for the common coupling strength ($\kappa$) and a necessary condition for the sampling-period are derived such that the stochastic phase-cohesiveness is achieved. The analysis is performed within the stochastic systems framework and validated by means of numerical simulations.     
\end{abstract}
\section{Introduction}\label{sec:int}
Synchronization is among the key collective behavior of many complex networks, including biological and neural networks. The well-celebrated Kuramoto model \cite{kuramoto2012chemical,strogatz2000kuramoto} has been a paradigm for studying interconnected oscillators. 

Kuramoto network has been considered in both continuous-time and discrete-time settings. Considering the continuous-time deterministic dynamics, the current literature has addressed various problems \cite{jadbabaie2004stability,
dorfler2014synchronization,mallada2013synchronization,jafarpour2018synchronization,scardovi2007synchronization,franci2010phase}, for instance conditions on the critical coupling for phase and frequency synchronization \cite{jadbabaie2004stability,dorfler2014synchronization}. 
The problem of explosive synchronization in large scale networks has also been studied using Kuramoto model by incorporating a  correlation between the structural and the dynamical properties of the network \cite{gomez2011explosive} as well as assuming coupling strength as a function of exogenous frequencies \cite{zhang2013explosive}.  

Besides the continuous-time analysis, discrete-time synchronization is another interesting direction specially that estimation of the behavior of natural/man-made systems are mainly done in a discrete-time fashion. Deterministic discrete-time Kuramoto models have been studied in e.g. \cite{klein2008integration,cenedese2016multi}. A bound for the product of coupling term and sampling period has been presented in \cite{klein2008integration} in order to achieve phase-synchronization, which is a specific form of phase-cohesiveness, where the underlying graph has been either a complete or a star graph with a common and constant coupling strength and zero exogenous frequencies. 

In addition to deterministic models, stochastic Kuramoto network has also been studied using a continuous-time Fokker-Planck model to analyze the network behavior where a noise was added to the dynamics of each oscillator, e.g., \cite{acebron2005kuramoto,bag2007influence,breakspear2010generative}. Also, Markov chains have been utilized for the analysis of the effects of phase discretization on the synchronization \cite{jorg2017stochastic}. 

Kuramoto model has been widely used to study synchronization in several disciplines including neural networks \cite{breakspear2010generative,cumin2007generalising}. The model has been utilized to study the behavior of spiking neurons represented by conductance-based models \cite{izhikevich2007dynamical}. In such models, the coupling conductance can be subject to fluctuations due to noises \cite{richardson2005synaptic}.
 
Motivated by noisy interconnections in neural networks as well as considering the explosive synchronization behavior modeled by frequency-dependent coupling, the main contribution of the paper is to present a notion of stochastic cohesive behavior for such a network and characterize conditions under which this behavior is achieved for stochastic discrete-time Kuramoto oscillators. We consider both bidirectional frequency-dependent tree networks as well as undirected tree networks. Our choice of studying tree networks is encouraged by observations that large-scale inter-areal connectivity in the brain can be approximated as a tree network \cite{stam2014trees}. 

We use the term {\em stochastic phase-cohesiveness} to refer to a probabilistic counterpart (employing the concept of {\em recurrent Markov chains}), of the standard deterministic phase-cohesiveness. We assume that all exogenous frequencies are uncertain. The oscillators' exogenous frequencies are modeled as the sum of a constant value and a random variable. We first consider a network of oscillators with a bidirectional graph topology such that each oscillator is coupled to its neighbors with a coupling term equal to the product of a common coupling term, namely $\kappa$, and its own uncertain exogenous frequency. As a result, considering a frequency-dependent network, the weights of all edges of the graph are affected by the uncertain exogenous signals. We derive a sufficient condition on the bound of $\kappa$ and a necessary condition for the sampling-time such that the discrete-time network achieves stochastic phase-cohesiveness assuming that the expectation of the minimum eigenvalue of the graph weighted edge Laplacian is positive. Analogously, we perform the analysis for the undirected tree graphs, \ie\  all edges have a common and constant positive coupling strength $\kappa$, and provide conditions for the stochastic phase-cohesiveness. Compared to our previous work \cite{jafarian2018sync}, which studied frequency synchronization in a continuous-time bidirectional tree network with constant and positive exogenous frequencies, this paper studies a stochastic discrete-time network where each exogenous frequency is modeled as the sum of a constant-value and a random variable which may also take negative values. Thus, the weights of the graph edges are not always positive. The latter makes the analysis of such a network more challenging. We use a probabilistic measure, the expectation of the minimum eigenvalue of the weighted Laplacian, to tackle this problem. 
 
To the best of our knowledge, stability of uncertain parameter Kuramoto models, where the uncertainty has a random nature, have not been considered in the existing literature. In particular, the case of frequency-dependent networks where the interconnection of each two oscillators is subject to random uncertainties has not been studied. 
 
The paper is organized as follows. Section \ref{sec:pre} presents preliminaries and problem formulation. Section \ref{sec:con} presents the notion of stochatic phase-cohesiveness and studies the conditions under which this behavior is achieved for a frequency-dependent network. 
The analysis of the undirected graph is presented in Section \ref{sec:comp}. Section \ref{sec:sim} presents simulation results and Section \ref{sec:cl} concludes the paper.\\ 
\noindent{\bf{Notation}:}
Symbol $\mathbf{1}_n$ is a $n$-dimensional vector. The empty set is denoted by $\emptyset$. The notations $x_{i,j}$ and $x_k$ are equivalently used for $x_i-x_j$ and $x(\mathrm{k})$, respectively. The symbol $\S^1$ denotes the unit circle. The term $|\theta_i-\theta_j|$ denotes the {\em geodesic distance} between two angles $\theta_i \in \S^1$, $\theta_j \in \S^1$ defined as the minimum of the counter-clockwise and the clockwise arc lengths connecting $\theta_i$ and $\theta_j$ \cite{dorfler2014synchronization}. A random variable $x$ selected from an arbitrary distribution $\mathcal{X}$ with mean $\mu$ and variance $\Sigma$ is denoted by $x \sim \mathcal{X} (\mu, \Sigma)$. The expected value and conditional expected value operators are denoted by $\boldsymbol E[\cdot]$ and $\boldsymbol E[\cdot|\cdot]$, respectively. 
\section{Preliminaries and Problem Statement}\label{sec:pre}
In this section, we first revisit some preliminaries of the graph theory and Markov chains (MC), and then we state the problem formulation.\\[1mm]
{\bf Graph theory preliminaries:} For a connected undirected graph $G(\mathcal V,\mathcal E)$, the node-set $\mathcal V$ corresponds to $n$ nodes and the edge-set $\mathcal E \!\subset \!\mathcal V \!\times \!\mathcal V$ corresponds to $m$ edges. The incidence matrix $B_{n \times m }$ associated to $G(\mathcal V,\mathcal E)$ describes which nodes are coupled by an edge. 
The matrix $L= B B^T$ is called the graph Laplacian and $L_g=B^T B$ is the edge Laplacian. 
For any undirected tree graph, all eigenvalues of $L_g$ are equal to the nonzero eigenvalues of $L$ \cite{mesbahi2010graph}. In this paper, we consider  trees which are a  subclass of connected graphs without cycles, i.e., any two nodes are connected by exactly one unique path. The edge Laplacian of a tree graph is invertible \cite{mesbahi2010graph}.\\[1mm]
{\bf Stochastic processes:}
The three-tuple $(\Omega, \mathcal{F}, \textsf{P})$ defines a probability space, where $\Omega$ (sample space) is the set of all possible outcomes, $\mathcal{F}$ is a $\sigma$-algebra\footnote{A $\sigma$-algebra $\mathcal{F}$ defined on a set $\Omega$ is a set containing subsets of $\Omega$ including the empty set.} of events with associated probabilities determined by the probability measure $\textsf{P}$. The following definitions are mainly borrowed from \cite{meyn2012markov}. 
\begin{definition} \cite[Ch.3]{meyn2012markov}
Let $\Omega$ be a sample space, and $\mathcal{F}$ any $\sigma$-algebra on $\Omega$, i.e. the pair $(\Omega, \mathcal{F})$ is a measurable space. 
We call $\Omega$ an \textit{uncountable} space if it is assigned a countably generated $\sigma$-algebra\footnote{Assume $\mathcal{B}$ is a random process (arbitrary family of subsets of $\Omega$) defined on a probability space $(\Omega, \mathcal{F} , \mathsf{P})$. Then, the smallest $\sigma$-algebra on which $\mathcal{B}$ is measurable, i.e. the intersection of all $\sigma$-algebras on which $\mathcal{B}$ is measurable, is called the generated $\sigma$-algebra by $\mathcal{B}$.} $\mathcal{F}(\Omega)$.
\end{definition}

We say $(\Omega,\mathcal{F})$ is a measurable space if the $\sigma$-algebra $\mathcal{F}$ on $\Omega$ satisfies the following properties:
(a) $\emptyset \in \mathcal{F}$,
(b) If $B \in \mathcal{F}$, then $B^c \in \mathcal{F}$, where $B^c=\Omega\setminus B$,
(c) If $B_1\in \mathcal{F}$ and $B_2\in \mathcal{F}$, then $B_1\cup B_2 \in \mathcal{F}$. 

The probability measure $\mathsf{P}:\mathcal{F}\rightarrow [0,1]$ is a measure on $(\Omega, \mathcal{F})$ that assigns a probability to each outcome of~$\mathcal{F}$.
\begin{definition}\label{def:markov}
A stochastic process $\Phi=\{\Phi_0, \Phi_1, \ldots\}$ evolving on a sample space $\Omega$ associated by a probability law $\mathsf{P}$ is a \textit{time homogeneous Markov chain} if for some sets $\mathcal{B}$ a set of transition probabilities $\{P^n(\omega,\mathcal{B}), \omega\in \Omega, \mathcal{B}\subset \Omega\}$ exist such that for $n,m$ in $\mathbb{Z}^+$
\begin{equation*}
\mathsf{P}(\Phi_{n+m}\in \mathcal{B}|\Phi_j,j\leq m, \Phi_m=\omega)=P^n(\omega,\mathcal{B}).
\end{equation*}
The independence of the transition probability $P^n(\omega,\mathcal{B})$ from $j\leq m$ is the Markov property, and its independence from $m$ is the time homogeneity property.
\end{definition}

\begin{definition}\label{def:irreducibility}
Let a MC $\Phi\!=\!\{\Phi_0,\Phi_1,\ldots\}$ evolves in general sample space $\Omega$ equipped with $\sigma$-algebra $\mathcal{F}(\Omega)$. Then:
\begin{enumerate}
\item for any $B\in \mathcal{F}(\Omega)$, the measurable function $\tau_B: \Omega \rightarrow \mathbb{Z}^+\cup \{\infty\}$ denotes the first return time to the set $B$, i.e.
\begin{equation}
\tau_B \triangleq \min \{n\geq 1\;|\;\Phi_n\in B\}.
\end{equation}
\item for any measure $\varphi$ on the $\sigma$-algebra $\mathcal{F}(\Omega)$, $\Phi$ is said to be $\varphi$-irreducible if $\forall \; \omega \in \Omega$, and  $B\in \mathcal{F}$, $\varphi(B)>0$ implies
$\mathsf{P}_{\omega}(\tau_B<\infty)>0$.
 \end{enumerate}
 \end{definition}
According to the definition \ref{def:irreducibility}, the entire state space of a MC is reachable, independent of the initial state, via finite number of transitions only if the MC is $\varphi$-irreducible. Moreover, if a MC is $\varphi$-irreducible, then a unique maximal irreducibility measure $\psi > \varphi$ exists on $\mathcal{F}(\Omega)$ such that $\Phi$ is $\varphi^\prime$-irreducible for any other measure $\varphi^\prime$ if and only if $\psi > \varphi^\prime$. We say then the MC $\Phi$ is $\psi$-irreducible.
\subsection{Problem statement}\label{sec:pf}
We consider $n$ discrete-time oscillators communicating over a connected and bidirectional tree graph such that each oscillator dynamics is obtained with a first order (zero-order hold) discrete-time approximation of 
\be\label{eq:m}
{\dot\theta}_i= (\omega_i + n_i(t)) (1 - \kappa \sum_{j \in {\cal N}_i} \sin(\theta_i(t)-\theta_j(t))), 
\ee
such that the discrete-time dynamics of oscillator $i$ is {\footnote{The model is motivated by a frequency-dependent neural network subject to fluctuations in coupling strength (see Section \ref{sec:int}).}} 
\begin{equation}\label{eq:m1h}
{\theta_i}(\mathrm{k}+1)\!=\! {\theta_i}(\mathrm{k})+ \tau (\omega_i+ n_i(\mathrm{k})) (1 - \kappa \!\sum_{j \in {\cal N}_i} \!\sin(\theta_i(\mathrm{k})-\theta_j(\mathrm{k}))), 
\end{equation}
where $k >0$, $\theta_i(\mathrm{k}) \in \S^1$, $\omega_i \in \R$, and $\tau >0$ represent the time step, the phase and exogenous frequency of oscillator $i$, and the sampling time, respectively. Symbol ${\cal N}_i$ denotes the set of neighbors of node $i$. The parameter $\kappa > 0, \kappa \in \R$ is the constant coefficient of the coupling strength of all links of the graph. The disturbance process $n_i$, for all $i\in\{1,\ldots,n\}$, is assumed to be an i.i.d random sequence with the random realizations $n_i(\mathrm{k})$ selected from an arbitrary continuous distribution with finite mean and variance, at each $k$. This model can be interpreted as a bidirectional communication where the weights of coupling of each edge at each direction depends on the randomly disturbed exogenous frequency of its head oscillator.\\ 
We define the augmented phase state $\boldsymbol\theta(\mathrm{k})\triangleq [\theta_1(\mathrm{k}),\ldots,\theta_n(\mathrm{k})]^\top$. The compact form of the relative phase dynamics can then be written as
\be\begin{aligned}\label{eq:rel}
B^T \boldsymbol\theta(\mathrm{k}+1)&= B^T \boldsymbol\theta(\mathrm{k}) + \tau B^T {\underaccent{\sim}{\boldsymbol \omega}}(\mathrm{k}) {\boldsymbol 1}_n \\&- \kappa \tau B^T {\underaccent{\sim}{\boldsymbol \omega}}(\mathrm{k}) B \sin(B^T \boldsymbol\theta(\mathrm{k})), 
\end{aligned}\ee
where $\boldsymbol\theta(\mathrm{k})$ is the state vector at time step $k$, $B$ is the graph incidence matrix and ${\underaccent{\sim}{\boldsymbol \omega}}(\mathrm{k})$ is a diagonal random matrix at time step $k$ whose diagonal elements are equal to the noisy exogenous frequency of the oscillators, \ie, 
\be\label{eq:ran}
{\underaccent{\sim}{\boldsymbol \omega}}_{n \times n}(\mathrm{k})=
  \begin{pmatrix}
    \omega_{1}+ n_1(\mathrm{k})& \dots & 0\\
    \vdots & \!\!\!\ddots\!\!\! & \vdots \\
    0 & \dots & \omega_{n}+ n_n(\mathrm{k})\\
  \end{pmatrix}.
\ee 
We also assume that the initial relative phase, \ie\  $\theta_{i}(0)-\theta_{j}(0)$, is an arbitrary random variable, independent from the noise realizations $n_i(\mathrm{k})$, selected from any finite moment probability distribution with continuous density function such that $|\theta_{i}(0)-\theta_{j}(0)| \leq \gamma, j \in {\cal N}_i$, where $\gamma>0, \gamma=\frac{\pi}{2}-\varepsilon$ for some $\varepsilon>0$, and $|.|$ denotes the geodesic distance \cite{dorfler2014synchronization}. Thus, the set of randomly selected initial conditions is
\begin{equation}\label{eq:s}
S^{G}(\gamma)\!=\!\left\{\!\theta_i(\mathrm{k}) \!\in \S^1\!, |\theta_{i,j}(\mathrm{k})| \leq \gamma, 0<\!\gamma < \!\frac{\pi}{2}\right\}\!.
\end{equation}

The noise process $n_i$'s for all $i\in\{1,\ldots,n\}$ together with the random initial relative phase $\theta_{i,j}(0)$ for all $i$ and $j\in \mathcal{N}_i$ generate a probability space $(\Omega, \mathcal{F}, \textsf{P})$ where $\Omega$ (sample space) is the set of all possible outcomes, $\mathcal{F}$ is a $\sigma$-algebra of events with the associated probabilities determined by the probability measure function $\textsf{P}$. It is straightforward to conclude that the three-tuple $(\Omega, \mathcal{F}, \textsf{P})$ represents an uncountable probability space, because the noise distribution has a continuous density function and the noise realizations can take any values of their supporting range $(-\infty,+\infty)$ at any time instance $k$. 

According to (\ref{eq:rel}), dynamics of $\boldsymbol\theta(\mathrm{k}+1)$ depends only on the most recent state $\boldsymbol\theta(\mathrm{k})$ and the noise variables $\{n_1(\mathrm{k}),\ldots,n_n(\mathrm{k})\}$, therefore, we can conclude that $\boldsymbol\theta(\mathrm{k})$ is a Markov chain (MC) for all $k\in\{0,1,2,\ldots\}$ with its dynamics evolving over the uncountable probability space $(\Omega, \mathcal{F}, \textsf{P})$. Based on the definitions \ref{def:markov} and \ref{def:irreducibility}, $\boldsymbol\theta(\mathrm{k})$ is a time homogeneous MC because the difference equation (\ref{eq:rel}) is time-invariant and the noise process $n_i$'s are i.i.d. for $i \in \{1, \ldots, n\}$ at every time-step $k$. This implies that $\boldsymbol\theta(\mathrm{k})$ evolves according to a stationary transition probability on the sample space. Moreover, the chain is a $\psi$-irreducible, with $\psi$ a nontrivial measure on the $\sigma$-algebra, since the noise distribution is absolutely continuous having a positive density function at any state of the probability space.
\begin{definition}\cite{meyn1994}\label{def:PHR}
Let the $\psi$-irreducible MC $\Phi=\{\Phi_0,\Phi_1,\ldots\}$ be defined over the probability space $(\Omega,\mathcal{F},\textsf{P})$ with random variables measurable with respect to some known $\sigma$-algebra $\mathcal{F}(\Omega)$. Then $\Phi$ is said to be \textit{recurrent} if
\begin{equation*}
\textsf{P}_\omega(\tau_B<\infty)=1, \forall \omega\in \Omega, B\in \mathcal{F}.
\end{equation*}
\end{definition}
Intuitively, definition \ref{def:PHR} states that if a state of a recurrent MC leaves a subset $B \in \mathcal{F}$ with non-zero probability, the state returns to the set $B$ with probability one.\\
Consider $0< \delta < \frac{\pi}{2}$. A deterministic phase-cohesiveness implies that $\forall i, j \in \mathcal V, |\theta_j(\mathrm{k})-\theta_i(\mathrm{k})|< \delta$ \cite{dorfler2014synchronization}. Inspired by this, we now define the stochastic phase-cohesiveness. First, let us define
\begin{equation}\label{eq:si}
\Omega=\left\{\theta_i(\mathrm{k}) \in \S^1, |\theta_{i,j}(\mathrm{k})| \leq \frac{\pi}{2} \right\}.
\end{equation}
\begin{definition}\label{def1}
The relative phase process $B^T \boldsymbol\theta$ defined on the probability space $(\Omega,\mathcal{F},\textsf{P})$ with $\Omega$ in \eqref{eq:si} is phase-cohesive in a stochastic sense if the process is recurrent having a desired subset of the state space, namely $S^{G}(\gamma)$ defined in \eqref{eq:s}, \ie\ $\textsf{P}_{ B^T \boldsymbol\theta(\mathrm{k})}(\tau_{S^{G}(\gamma)} <\infty)=1, \forall k, \forall \ \boldsymbol\theta(\mathrm{k}) \in \Omega, S^{G}(\gamma) \in \mathcal{F}$.
\end{definition}
{\bf{Problem}} 
Our goal is to study conditions under which the stochastic phase-cohesiveness can be achieved for the relative-phase process in \eqref{eq:rel}. In \cite{jafarian2018sync}, a deterministic, continuous-time and noise-free counterpart of \eqref{eq:rel} has been analyzed. It was shown that for a sufficiently large $\kappa$, \ie 
\be\label{eq:k}
\kappa > \frac{|\omega_{\max} -\omega_{\min}|}{\lambda_{\min}(B^T {\boldsymbol \omega} B) \sin(\gamma)},
\ee
with $\boldsymbol \omega$ being a noise-free positive diagonal matrix with a similar structure as in \eqref{eq:ran}, the deterministic, continuous-time and noise-free network achieves phase-cohesiveness.\\ 
In this paper, we consider the case where the exogenous frequencies, and hence the weights of edges, are combined with random uncertainties which are not always positive, hence, contrary to \cite{jafarian2018sync}, this paper allows random zero and negative edge weights as well. 
Notice that $\theta_i(\mathrm{k}) \in \S^1, \forall k$ indicates that the dynamics is evolving on the $n$-Torus \cite{scardovi2007synchronization}. In this paper, we study the relative process \eqref{eq:rel} and restrict the randomly selected initial conditions to $S^{G}(\gamma)$ which is diffeomorphic to the Euclidean space and hence run our analysis in the Euclidean space.  
\section{Stochastic phase cohesiveness: Bidirectional asymmetric network}\label{sec:con}
In this section, we derive a sufficient coupling condition and a necessary sampling-time condition to achieve phase-cohesiveness for stochastic oscillators modeled in \eqref{eq:rel}. Our results are based on the concept of recurrent MC for discrete-time stochastic systems \cite{meyn2012markov}. The following lemma which is directly inferred from Theorem 8.4.3 of \cite{meyn2012markov} provides the condition under which a $\psi$-irreducible MC is recurrent.
\begin{lemma}\label{thm:non-evanescent}
Assume that $\Phi=\{\Phi_0, \Phi_1,\ldots\}$ is a $\psi$-irreducible MC evolving on a sample space $\Omega$ equipped with the $\sigma$-algebra $\mathcal{F}$, and $V(\Phi):\Omega \rightarrow \mathbb{R}^+$ is a monotonic function such that $V(\Phi)\rightarrow \infty$ as $\Phi \rightarrow \infty$. The MC $\Phi$ is recurrent if there exists a small set{\footnote{Theorem 8.4.3. of \cite{meyn2012markov} requires a {\it petite} set. Every small set is a petite set and small sets always exist for $\varphi$-irreducible chains \cite{meyn2012markov}.}} $C\in \mathcal{F}$ satisfying
\begin{equation*}
\boldsymbol E[V(\Phi_{k+1})|\Phi_k=x]-V(x)< 0, \quad \;\forall \Phi_k \in \Omega \setminus C.
\end{equation*}
\end{lemma}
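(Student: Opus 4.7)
The plan is to recognize the claim as a Foster--Lyapunov drift criterion for $\psi$-irreducible Markov chains on a general state space and to reduce it directly to Theorem~8.4.3 of \cite{meyn2012markov}. That reference requires three ingredients: (i)~$\psi$-irreducibility of the chain, (ii)~a petite test set $C\in\mathcal{F}$, and (iii)~a norm-like Lyapunov function $V$ whose one-step conditional expectation strictly decreases outside $C$. The task is therefore to verify that each of these ingredients is in place under the stated hypotheses, which is a matter of bookkeeping rather than new mathematics.

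Each of the three checks is short. Item~(i) is part of the assumption. For (ii), $C$ is assumed to be small, and, as the footnote already recalls, every small set is petite and small sets are known to exist for $\psi$-irreducible chains, so $C$ is an admissible test set for the cited theorem. For (iii), the function $V$ is positive, monotone, and satisfies $V(\Phi)\to\infty$ as $\Phi\to\infty$, so its sublevel sets $\{x : V(x)\le L\}$ are bounded for every $L>0$; this is precisely the norm-like property used in Meyn--Tweedie. The hypothesis $\boldsymbol E[V(\Phi_{k+1})\mid\Phi_k=x]-V(x)<0$ on $\Omega\setminus C$ is exactly the strict drift inequality Theorem~8.4.3 asks for. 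Applying that theorem yields that $\Phi$ is non-evanescent, i.e.\ $\mathsf{P}_\omega(\Phi_k\to\infty)=0$; combined with $\psi$-irreducibility this upgrades to Harris recurrence, which in turn delivers $\mathsf{P}_\omega(\tau_B<\infty)=1$ for every $B\in\mathcal{F}$ of positive $\psi$-measure, matching Definition~\ref{def:PHR} exactly.

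The step I expect to be the main obstacle --- and the reason one invokes Theorem~8.4.3 rather than reproving it --- is translating the possibly arbitrarily weak strict inequality on $\Omega\setminus C$ into an almost-sure finite hitting time of $C$. Under the hood this is done by applying the optional-stopping theorem to the stopped supermartingale $V(\Phi_{k\wedge\tau_C})$ together with a truncation argument on the sublevel sets of $V$; the norm-like property is what keeps the limiting trajectory from drifting to infinity and secures $\tau_C<\infty$ almost surely. Since the lemma is explicitly stated as a direct consequence of the cited theorem, the entire proof reduces to the verification above, and I would not try to unfold the supermartingale argument a second time in the body of the paper.
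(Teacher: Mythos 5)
Your proposal matches the paper's treatment: the lemma is stated there as a direct consequence of Theorem~8.4.3 of \cite{meyn2012markov}, with no further argument beyond the footnote's small-set/petite-set remark, and your verification of the three hypotheses (irreducibility, petiteness of $C$, norm-like $V$ with strict drift) is exactly the intended reduction. Your extra care in distinguishing recurrence from Harris recurrence (Definition~\ref{def:PHR} is phrased as the latter) is if anything slightly more precise than the paper, but it does not change the route.
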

\begin{theorem}\label{pr1}
Consider the discrete-time stochastic process in \eqref{eq:rel} with the set $S^{G}(\gamma)=\left\{\theta_i(\mathrm{k}) \in \S^1, |\theta_{i,j}(\mathrm{k})| \leq \gamma, 0<\gamma < \frac{\pi}{2}\right\}$. Assume that ${\boldsymbol E}[\lambda_{\min}(B^T {\underaccent{\sim}{\boldsymbol \omega}} B)]$ is strictly positive. Then, the process \eqref{eq:rel} is recurrent if the two following conditions hold
\be\begin{aligned}\label{eq:kappa}
& \kappa > \frac{(1-\sin(\gamma)) \frac{\pi}{2 \tau} + {\boldsymbol E}_{\max}|\Delta{\underaccent{\sim}{\omega}}|} {(\sin^{2}(\gamma)) {\boldsymbol E} [\lambda_{\min}(B^T {\underaccent{\sim}{\boldsymbol \omega}} B)]}\\
& \tau  < \frac{(1+\sin(\gamma)) \gamma}{\kappa {\boldsymbol E}[\lambda_{\max}(B^T {\underaccent{\sim}{\boldsymbol \omega}} B)]+ {\boldsymbol E}_{\max}|\Delta{\underaccent{\sim}{\omega}}|},
\end{aligned}\ee
where ${\boldsymbol E}_{\max}|\Delta{\underaccent{\sim}{\omega}}|= {\underaccent{i,j}{\max}}{\boldsymbol E}[|(\omega_i+n_i(\mathrm{k})) - (\omega_j+n_j(\mathrm{k}))|]$.
\end{theorem}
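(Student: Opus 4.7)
The plan is to apply Lemma \ref{thm:non-evanescent} with the target small set taken to be $S^{G}(\gamma)$: once the expected one-step drift of a suitable Lyapunov function is shown strictly negative on $\Omega\setminus S^{G}(\gamma)$, the $\psi$-irreducible Markov chain $\boldsymbol\theta$ is recurrent to $S^{G}(\gamma)$, which by Definition \ref{def1} is exactly the stochastic phase-cohesiveness we seek. The analysis is carried out in the Euclidean chart on $\Omega$ where the relative-phase vector $e(\mathrm{k})\triangleq B^T\boldsymbol\theta(\mathrm{k})$ is well defined.

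I would use the Kuramoto-type potential $V(\boldsymbol\theta(\mathrm{k}))=\sum_{(i,j)\in\mathcal{E}}\bigl(1-\cos(\theta_i(\mathrm{k})-\theta_j(\mathrm{k}))\bigr)$, whose gradient with respect to $e$ is exactly $\sin(e)$ and therefore couples cleanly with the sine nonlinearity appearing in \eqref{eq:rel}. A second-order Taylor expansion of each $\cos$, together with $|\cos(\cdot)|\le 1$ on the Lagrange remainder, yields the deterministic bound
\begin{equation*}
V(\mathrm{k}+1)-V(\mathrm{k})\le \sin(e(\mathrm{k}))^T \Delta e(\mathrm{k}) + \tfrac{1}{2}\,\|\Delta e(\mathrm{k})\|^2,
\end{equation*}
where $\Delta e(\mathrm{k})=\tau B^T \underaccent{\sim}{\boldsymbol\omega}(\mathrm{k})\boldsymbol 1_n - \kappa\tau B^T \underaccent{\sim}{\boldsymbol\omega}(\mathrm{k}) B\sin(e(\mathrm{k}))$ is read off from \eqref{eq:rel}. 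Substituting and taking the conditional expectation $\boldsymbol E[\cdot \mid e(\mathrm{k})]$ splits the drift into a stabilising term $-\kappa\tau\,\sin(e)^T \boldsymbol E[B^T\underaccent{\sim}{\boldsymbol\omega}B]\sin(e)$, a noise-bias term $\tau\,\sin(e)^T \boldsymbol E[B^T\underaccent{\sim}{\boldsymbol\omega}]\boldsymbol 1_n$, and a quadratic discretisation residual of order $\tau^2$.

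The three pieces are then estimated as follows. Rayleigh's inequality applied pathwise gives $\sin(e)^T B^T \underaccent{\sim}{\boldsymbol\omega} B \sin(e)\ge \lambda_{\min}(B^T\underaccent{\sim}{\boldsymbol\omega}B)\|\sin(e)\|^2$, valid even though individual realisations of $B^T\underaccent{\sim}{\boldsymbol\omega}B$ may be indefinite whenever some $\omega_i+n_i(\mathrm{k})$ is negative; taking expectation afterwards produces the factor $\boldsymbol E[\lambda_{\min}(B^T\underaccent{\sim}{\boldsymbol\omega}B)]>0$ postulated in the theorem, and on $\Omega\setminus S^{G}(\gamma)$ at least one $|e_{i,j}|>\gamma$ so $\|\sin(e)\|^2\ge \sin^2(\gamma)$. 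The noise bias unfolds edgewise into $\sum_{(i,j)}\sin(e_{i,j})\bigl[(\omega_i+n_i)-(\omega_j+n_j)\bigr]$, so after absolute values and the $\pi/2$-diameter bound on $\Omega$ it is controlled by $\boldsymbol E_{\max}|\Delta\underaccent{\sim}\omega|$. The quadratic residual is majorised via $\boldsymbol E[\lambda_{\max}(B^T\underaccent{\sim}{\boldsymbol\omega}B)]$ through $\|\Delta e\|^2$.

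Combining these estimates yields a scalar inequality whose negativity on $\Omega\setminus S^{G}(\gamma)$ is equivalent to the two conditions \eqref{eq:kappa}: the $\kappa$-bound forces the stabilising term to dominate both the noise bias and the residual at the deviation scale $\gamma$, while the $\tau$-bound keeps the discretisation residual subordinate and prevents a single update from overshooting $\Omega$. The main obstacle will be the bookkeeping required to aggregate the $O(\tau^2)$ second-order term and the $O(\tau)$ noise bias so that the numerator and denominator of the $\kappa$-condition appear in exactly the form \eqref{eq:kappa}, rather than with looser Cauchy--Schwarz constants. Once $\boldsymbol E[V(\mathrm{k}+1)-V(\mathrm{k})\mid e(\mathrm{k})]<0$ is established on $\Omega\setminus S^{G}(\gamma)$, Lemma \ref{thm:non-evanescent} delivers recurrence of $B^T\boldsymbol\theta$ to $S^{G}(\gamma)$.
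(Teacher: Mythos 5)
Your high-level strategy coincides with the paper's --- establish a negative one-step drift outside $S^{G}(\gamma)$ and invoke Lemma~\ref{thm:non-evanescent} --- but your Lyapunov function and expansion are genuinely different: the paper works with the $\ell_1$-type function $V=\sin(\gamma)\,\boldsymbol 1_m^T|B^T\boldsymbol\theta|$ together with a case analysis on the sign of the resulting bracketed expression, while you use the Kuramoto potential $\sum_{(i,j)}(1-\cos(e_{i,j}))$ with a second-order Taylor expansion. Unfortunately the difference is not cosmetic, and what you defer as ``bookkeeping'' hides a real gap: your expansion does not lead to the conditions~\eqref{eq:kappa}, and you have not argued that~\eqref{eq:kappa} implies whatever drift inequality your expansion does produce.

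Concretely: (i) your stabilising term is lower-bounded by $\kappa\tau\sin^2(\gamma)\,\boldsymbol E[\lambda_{\min}(B^T{\underaccent{\sim}{\boldsymbol \omega}}B)]$, since on $\Omega\setminus S^{G}(\gamma)$ you can only certify $\|\sin(e)\|^2\geq\sin^2(\gamma)$ from a single violating edge, whereas your noise-bias bound aggregates over all $m$ edges; this leaves a factor $m$ multiplying $\boldsymbol E_{\max}|\Delta{\underaccent{\sim}{\omega}}|$ that is absent from~\eqref{eq:kappa} (in the paper's derivation an $m$ appears on both sides and cancels). (ii) The residual $\tfrac{1}{2}\|\Delta e\|^2$ contains a term of order $\kappa^2\tau^2\,\lambda_{\max}(B^T{\underaccent{\sim}{\boldsymbol \omega}}B)^2\,\|\sin(e)\|^2$; under the theorem's own sampling-time condition $\tau$ is only of order $1/(\kappa\,\boldsymbol E[\lambda_{\max}(B^T{\underaccent{\sim}{\boldsymbol \omega}}B)])$, so this residual is of order $\kappa\tau$ --- the \emph{same} order as the stabilising term --- and nothing in~\eqref{eq:kappa} forces it to be dominated by $\kappa\tau\sin^2(\gamma)\,\boldsymbol E[\lambda_{\min}(B^T{\underaccent{\sim}{\boldsymbol \omega}}B)]$. (iii) The term $(1-\sin(\gamma))\tfrac{\pi}{2\tau}$ in the $\kappa$-condition has no counterpart in your expansion; in the paper it arises from comparing $\sin(\gamma)|e_{i,j}|$ with $|\sin(e_{i,j})|\,|e_{i,j}|$ inside the $\ell_1$ function, not from a discretisation residual. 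Finally, note that $\sum_{(i,j)}(1-\cos(e_{i,j}))$ is bounded above by $2m$, so as a function on the Euclidean chart it does not satisfy the coercivity hypothesis $V(\Phi)\rightarrow\infty$ required by Lemma~\ref{thm:non-evanescent}, unlike the paper's choice; you would need to justify dropping that hypothesis or replace $V$.
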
 
 
\begin{proof}
We prove \eqref{eq:rel} is recurrent w.r.t. $S^{G}(\gamma)$ by taking radially unbounded function $V= (\sin \gamma){\boldsymbol 1}_m^T |B^T \boldsymbol\theta|= \sin(\gamma) \sum_{i,j} |\theta_i(\mathrm{k})-\theta_j(\mathrm{k})|$ and showing that the one-step drift of V, \ie,  
\be\begin{aligned}\label{eq:dv1}
\Delta V (\boldsymbol \theta)=&{\boldsymbol E} [V(\boldsymbol\theta({k+1}))|\boldsymbol\theta({k})]- V(\boldsymbol\theta({k})),\\
=&{\boldsymbol E} [\sin(\gamma) {\boldsymbol 1}_m^T |B^T \boldsymbol\theta({k+1})|\ |\ \boldsymbol\theta({k})] - \\&\sin(\gamma) {\boldsymbol 1}_m^T |B^T \boldsymbol\theta(\mathrm{k})|,
\end{aligned}\ee
is negative if $\left\{\forall i,j,\quad \gamma \leq |\theta_i(\mathrm{k})-\theta_j(\mathrm{k})| < \frac{\pi}{2}\right\}$. 
Consider the relative phase dynamics as in \eqref{eq:rel} and let $I^k$ denote the right-hand side of equation \eqref{eq:rel}. Then $(\sin \gamma) {\boldsymbol 1}_m^T |B^T \boldsymbol\theta(\mathrm{k}+1)|= (\sin \gamma){\boldsymbol 1}_m^T |I^k|$. Notice that for $\gamma \leq |\theta_i-\theta_j| < \frac{\pi}{2}$, we have $\sin(\gamma) \leq |\sin(\theta_{i,j})| <1, \sin(\gamma)>0$. Thus, 
\be\label{eq:ie1}
(\sin \gamma) {\boldsymbol 1}_m^T |I^k| \leq |\sin^T(B^T \boldsymbol \theta(\mathrm{k}))||I^k| < {\boldsymbol 1}_m^T |I^k|.
\ee
Moreover, we have
\be\begin{aligned}\label{eq:e2}
|\sin^T(& B^T \boldsymbol \theta(\mathrm{k}))||I^k| =|\sin^T(B^T \boldsymbol \theta(\mathrm{k}))I^k|= \\&|\sin^T(B^T \boldsymbol \theta(\mathrm{k})) B^T \boldsymbol \theta(\mathrm{k}) + \tau \sin^T(B^T \boldsymbol \theta(\mathrm{k})) B^T {\underaccent{\sim}{\boldsymbol \omega}}(\mathrm{k}) {\boldsymbol 1}_n \\ &- \kappa \tau \sin^T(B^T \boldsymbol \theta(\mathrm{k})) (B^T {\underaccent{\sim}{\boldsymbol \omega}}(\mathrm{k}) B) \sin(B^T \boldsymbol\theta(\mathrm{k}))|. 
\end{aligned}\ee
Denote $\sin(\gamma) {\boldsymbol 1}_{m}^T$ by $\boldsymbol 1^T_{\gamma}$. We can write
\be\begin{aligned}\label{eq:ie2}
\sin(B^T \boldsymbol\theta(\mathrm{k})) B^T \boldsymbol\theta(\mathrm{k}) &< ({\boldsymbol 1^T_{\gamma}}+({\boldsymbol 1^T}-{\boldsymbol 1^T_{\gamma}})) |B^T \boldsymbol\theta(\mathrm{k})|\\ & < {\boldsymbol 1^T_{\gamma}} |B^T \boldsymbol\theta(\mathrm{k})|+ ({\boldsymbol 1^T}-{\boldsymbol 1^T_{\gamma}}) |B^T \boldsymbol\theta(\mathrm{k})|.
\end{aligned}\ee
Combining \eqref{eq:ie1}, \eqref{eq:e2} and \eqref{eq:ie2}, we obtain
\be\begin{aligned}\label{eq:ie3}
\boldsymbol 1^T_{\gamma}|I^k| < &|{\boldsymbol 1^T_{\gamma}} |B^T \boldsymbol\theta(\mathrm{k})|+ ({\boldsymbol 1^T}-{\boldsymbol 1^T_{\gamma}}) |B^T \boldsymbol\theta(\mathrm{k})|+\\& \sin^T(B^T \boldsymbol\theta(\mathrm{k})) [\tau B^T {\underaccent{\sim}{\boldsymbol \omega}}(\mathrm{k}) {\boldsymbol 1}_n - \\& \kappa \tau B^T {\underaccent{\sim}{\boldsymbol \omega}}(\mathrm{k}) B \sin(B^T \boldsymbol\theta(\mathrm{k}))]|. 
\end{aligned}\ee
Now, in the view of \eqref{eq:dv1}, we can write 
\be\begin{aligned}\label{eq:dv2}
\Delta V (\boldsymbol \theta) &< {\boldsymbol E} [\ |\ \underbrace{{\boldsymbol 1^T_{\gamma}} |B^T \boldsymbol\theta(\mathrm{k})|}_{\bar a} + \underbrace{({\boldsymbol 1^T}-{\boldsymbol 1^T_{\gamma}}) |B^T \boldsymbol\theta(\mathrm{k})|}_{\bar b}-\\[1mm]& \underbrace{\kappa \tau \sin^T(B^T \boldsymbol\theta(\mathrm{k})) B^T {\underaccent{\sim}{\boldsymbol \omega}}(\mathrm{k}) B \sin(B^T \boldsymbol\theta(\mathrm{k}))}_{c}+\\[1mm]& \underbrace{\tau \sin^T(B^T \boldsymbol\theta(\mathrm{k})) B^T {\underaccent{\sim}{\boldsymbol \omega}}(\mathrm{k}) {\boldsymbol 1}_n}_{d}\ |\ ] -{\boldsymbol 1^T_{\gamma}} |B^T \boldsymbol\theta (\mathrm{k})|,
\end{aligned}\ee
where $\bar a > 0, \bar b > 0$ and $c, d \in \R$. If ${\boldsymbol E}[|\bar a + \bar b -c+d|] < \bar a$ holds, then we can conclude that $\Delta V <0$. Notice that $|\bar a + \bar b -c+d|$ is either equal to $\bar a + \bar b -c+d$ (if $\bar a + \bar b -c+d >0$) or equal to $-\bar a - \bar b +c-d$ (if $\bar a + \bar b -c+d <0$). Hence, the following should hold
\begin{enumerate}
\item ${\boldsymbol E}[\bar a + \bar b -c+d] < \bar a$ \ if \ $\bar a + \bar b -c+d >0$,
\item ${\boldsymbol E}[-\bar a - \bar b +c-d] < \bar a$\  if \ $\bar a + \bar b -c+d <0$.
\end{enumerate}
Notice that ${\boldsymbol E}[\bar a]=\bar a$ and ${\boldsymbol E}[\bar b]=\bar b$. Hence, if $-2 \bar a < \bar b+{\boldsymbol E}[-c+d]<0$ holds, then $\Delta V <0$. Since, matrix $B^T {\underaccent{\sim}{\boldsymbol \omega}}(\mathrm{k}) B$ is symmetric (with the same structure as in (11) in \cite{jafarian2018sync}), and we assume that ${\boldsymbol E}[\lambda_{\min}(B^T {\underaccent{\sim}{\boldsymbol \omega}}(\mathrm{k}) B)] >0$, 
 then we have ${\boldsymbol E}[c] > \kappa \tau {\boldsymbol E}[\lambda_{\min}(B^T {\underaccent{\sim}{\boldsymbol \omega}}(\mathrm{k}) B)] \sin^{T}(B^T \boldsymbol\theta(\mathrm{k}))\sin(B^T \boldsymbol\theta(\mathrm{k}))>0$.
Thus, the two following criteria should hold in order to obtain $\Delta V <0$:
\be\begin{aligned}\label{eq:two}
&\min\ {\boldsymbol E}[c] > \max\ \bar b+ {\boldsymbol E}[|d|],\\
&\max\ {\boldsymbol E}[c + |d|] < \min\ (2 \bar a + \bar b).
\end{aligned}\ee
Recall that we are analyzing the system assuming that $\left\{\forall i,j,\quad \gamma \leq |\theta_i-\theta_j| < \frac{\pi}{2}\right\}$. Thus, $|B^T \boldsymbol \theta(\mathrm{k})|$ and $|\sin^T(B^T \boldsymbol \theta(\mathrm{k}))|$ are upper (lower) bounded by $\frac{\pi}{2} {\boldsymbol 1}_m$ ($\gamma {\boldsymbol 1}_m$) and ${\boldsymbol 1}_m$ (${\boldsymbol 1^T_{\gamma}}$) respectively. Hence, from \eqref{eq:two} we obtain 
\be\begin{aligned}\nonumber
m \kappa \tau (\sin^{2}(\gamma)) {\boldsymbol E}[\lambda_{\min}(B^T {\underaccent{\sim}{\boldsymbol \omega}} B)] >& (1-\sin(\gamma)) \frac{m \pi}{2} + \\&m \tau {\boldsymbol E}_{\max}|\Delta{\underaccent{\sim}{\omega}}|\\\nonumber
 m \kappa \tau {\boldsymbol E}[\lambda_{\max}(B^T {\underaccent{\sim}{\boldsymbol \omega}} B)] + m \tau {\boldsymbol E}_{\max}&|\Delta{\underaccent{\sim}{\omega}}| < m (1+\sin(\gamma)) \gamma,
\end{aligned}\ee
where $m$ is the number of edges of the graph and ${\boldsymbol E}_{\max}|\Delta{\underaccent{\sim}{\omega}}|$ is the maximum expectation of disturbed relative exogenous frequencies over all edges. Finally, calculating $\kappa$ and $\tau$ from the above inequalities gives \eqref{eq:kappa} which ends the proof.
\end{proof}
\begin{remark}
Notice that the result in \eqref{eq:kappa} provides a necessary condition for the sampling time $\tau$ and a sufficient condition for the coupling strength $\kappa$. If we choose $\gamma$ sufficiently close to $\frac{\pi}{2}$, then the term $(1-\sin(\gamma))$ is negligible and the sufficient bound for $\kappa$ is comparable with \eqref{eq:k} which is obtained for the continuous-time deterministic frequency-dependent network (with a different choice of Lyapunov function) \cite{jafarian2018sync}. 
\end{remark}
\section{Stochastic phase cohesiveness: Undirected network}\label{sec:comp}
In this section, we consider an undirected network such that the coupling strength of all links is equal to $\kappa >0$. We assume that each oscillator dynamics follows 
\begin{equation}\label{eq:m2h}\nonumber
{\theta_i}(\mathrm{k}+1)= {\theta_i}(\mathrm{k})+ \tau (\omega_i+ n_i(\mathrm{k}))- \kappa \tau \sum_{j \in {\cal N}_i} \sin(\theta_i(\mathrm{k})-\theta_j(\mathrm{k})). 
\end{equation}
The compact network model is
\be\begin{aligned}\label{eq:rel2}
B^T \boldsymbol\theta(\mathrm{k}+1)=& B^T \boldsymbol\theta(\mathrm{k})+ \tau B^T {\underaccent{\sim}{\boldsymbol \omega}}(\mathrm{k}) {\boldsymbol 1}_n\\-& \kappa \tau B^T B \sin(B^T \boldsymbol\theta(\mathrm{k})).
\end{aligned}\ee
\begin{corollary}\label{co2}
Consider the discrete-time stochastic process in \eqref{eq:rel2} with the set $S^{G}(\gamma)=\left\{\theta_i(\mathrm{k}) \in \S^1,  |\theta_{i,j}(\mathrm{k})| \leq \gamma, 0<\gamma < \frac{\pi}{2}\right\}$. Then, the process \eqref{eq:rel2} is recurrent provided that the two following conditions hold
\be\begin{aligned}\label{eq:kappa2}
& \kappa > \frac{(1-\sin(\gamma)) \frac{\pi}{2 \tau} + {\boldsymbol E}_{\max}|\Delta{\underaccent{\sim}{\omega}}|}{(\sin^{2}(\gamma)) \lambda_{\min}(B^T B)}\\
& \tau  < \frac{(1+\sin(\gamma)) \gamma}{\kappa \lambda_{\max}(B^T B)+ {\boldsymbol E}_{\max}|\Delta{\underaccent{\sim}{\omega}}|},
\end{aligned}\ee
where ${\boldsymbol E}_{\max}|\Delta{\underaccent{\sim}{\omega}}|= {\underaccent{i,j}{\max}}{\boldsymbol E}[|(\omega_i+n_i(\mathrm{k})) - (\omega_j+n_j(\mathrm{k}))|]$.
\end{corollary}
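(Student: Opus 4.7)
My plan is to specialize the argument of Theorem \ref{pr1} to the deterministic coupling matrix $B^T B$. I would apply Lemma \ref{thm:non-evanescent} with the same radially unbounded candidate $V(\boldsymbol\theta) = \sin(\gamma)\,\boldsymbol 1_m^T |B^T \boldsymbol\theta|$ and show that the one-step drift
\[
\Delta V(\boldsymbol\theta) = \boldsymbol E\!\left[V(\boldsymbol\theta(\mathrm{k}{+}1))\mid \boldsymbol\theta(\mathrm{k})\right] - V(\boldsymbol\theta(\mathrm{k}))
\]
is strictly negative on the complement of $S^G(\gamma)$, i.e.\ on the set $\{\gamma \leq |\theta_i - \theta_j| < \pi/2\text{ for some edge }(i,j)\}$.

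I would then substitute the relative-phase dynamics \eqref{eq:rel2} into $|B^T \boldsymbol\theta(\mathrm{k}{+}1)|$ and reproduce the chain of inequalities \eqref{eq:ie1}--\eqref{eq:ie3} from Theorem \ref{pr1} verbatim, the only change being that the coupling term now reads $\kappa\tau \sin^T(B^T\boldsymbol\theta(\mathrm{k}))\, B^T B\, \sin(B^T\boldsymbol\theta(\mathrm{k}))$ rather than its $\underaccent{\sim}{\boldsymbol\omega}$-weighted counterpart. Because $B^T B$ is the edge Laplacian of a tree, it is symmetric and invertible (as recalled in the graph-theory preliminaries), so $\lambda_{\min}(B^T B)$ and $\lambda_{\max}(B^T B)$ are strictly positive deterministic constants. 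In particular, the expectation operator passes trivially through the coupling term, and the probabilistic hypothesis $\boldsymbol E[\lambda_{\min}(B^T\underaccent{\sim}{\boldsymbol\omega}B)]>0$ required in Theorem \ref{pr1} is automatic here.

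The only stochasticity that remains lives in the drift contribution $\tau \sin^T(B^T\boldsymbol\theta(\mathrm{k}))\, B^T \underaccent{\sim}{\boldsymbol\omega}(\mathrm{k})\,\boldsymbol 1_n$, whose conditional expectation is bounded in magnitude per edge by $\tau\,\boldsymbol E_{\max}|\Delta\underaccent{\sim}{\omega}|$, exactly as in Theorem \ref{pr1}. Splitting $\Delta V$ into the four terms $\bar a, \bar b, c, d$ used in \eqref{eq:dv2}, the two sign conditions on $\bar a + \bar b - c + d$ reduce to the deterministic analogues of \eqref{eq:two}:
\[
\min c > \max \bar b + \boldsymbol E[|d|], \qquad \max\!\big(c + |d|\big) < \min\!\big(2\bar a + \bar b\big),
\]
after which I would substitute the bounds $|B^T\boldsymbol\theta(\mathrm{k})| \leq \tfrac{\pi}{2}\boldsymbol 1_m$, $|B^T\boldsymbol\theta(\mathrm{k})|\geq \gamma \boldsymbol 1_m$ and $|\sin(B^T\boldsymbol\theta(\mathrm{k}))|\geq \sin(\gamma)\boldsymbol 1_m$ valid on $\{\gamma \leq |\theta_{i,j}|<\pi/2\}$, together with $\sin^T(B^T\boldsymbol\theta) B^T B \sin(B^T\boldsymbol\theta)\geq \lambda_{\min}(B^T B)\|\sin(B^T\boldsymbol\theta)\|^2$ and the corresponding upper bound with $\lambda_{\max}(B^T B)$. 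Solving the two resulting linear inequalities for $\kappa$ and $\tau$ yields \eqref{eq:kappa2} directly.

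Since the proof is a near-verbatim specialization of Theorem \ref{pr1}, I do not expect any serious obstacle; the one place a reader could stumble is noticing that the positivity assumption on $\boldsymbol E[\lambda_{\min}(B^T\underaccent{\sim}{\boldsymbol\omega}B)]$ is no longer required because invertibility of $L_g=B^T B$ on a tree guarantees $\lambda_{\min}(B^T B)>0$ unconditionally. I would emphasize that point and then invoke Lemma \ref{thm:non-evanescent} to conclude recurrence of $B^T\boldsymbol\theta$ with respect to $S^G(\gamma)$.
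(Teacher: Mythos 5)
Your proposal is correct and matches the paper's intent exactly: the paper omits this proof, stating only that it ``follows a similar trend as the proof of Theorem \ref{pr1}'' and that $B^TB$ is positive definite independent of the random variables, which is precisely the specialization you carry out. Your additional observation---that the hypothesis ${\boldsymbol E}[\lambda_{\min}(B^T\underaccent{\sim}{\boldsymbol\omega}B)]>0$ becomes automatic because the edge Laplacian of a tree is invertible and hence $\lambda_{\min}(B^TB)>0$---is the same point the paper makes in its one-line remark after the corollary.
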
 
Proof of Corollary \ref{co2} follows a similar trend as the proof of Theorem \ref{pr1} and hence omitted. Note that the edge Laplacian $B^T B$ is positive definite independent of random variables. 
\begin{remark}
For a deterministic discrete-time Kuramoto model with zero-exogenous frequencies, \cite{klein2008integration} proved that $0 < n \tau \kappa <2$ should hold for a complete graph in order to achieve phase-synchronization which is a specific form of phase-cohesiveness. Assuming that both exogenous frequencies and their corresponding random disturbances are zero (hence, a similar problem setting as in \cite{klein2008integration}), for a two node graph, which is both a complete and a tree graph, the result in \cite{klein2008integration} gives $0 < \tau \kappa <1$ and Corollary \ref{co2} gives $0 < \tau \kappa < \frac{\pi}{2}$. 
\end{remark}
\section{Simulation results}\label{sec:sim}
This section presents simulation results for a network of five oscillators over a line graph. Figure \ref{g1} shows the frequency-dependent network. The initial condition for the oscillators is set to $\boldsymbol \theta(0)=[\frac{\pi}{4},\frac{\pi}{8},\frac{-\pi}{8},\frac{-\pi}{5},\frac{\pi}{5}]$. The exogenous frequencies are set to $\omega_1=7, \omega_2=10, \omega_3=1, \omega_4=6, \omega_5=2$. Gaussian random variables are considered with variances equal to $V_1=3, V_2=5, V_3=0.5, V_4=2, V_5=1$. 
\begin{figure}[h]
\centering
\includegraphics[scale=0.5]{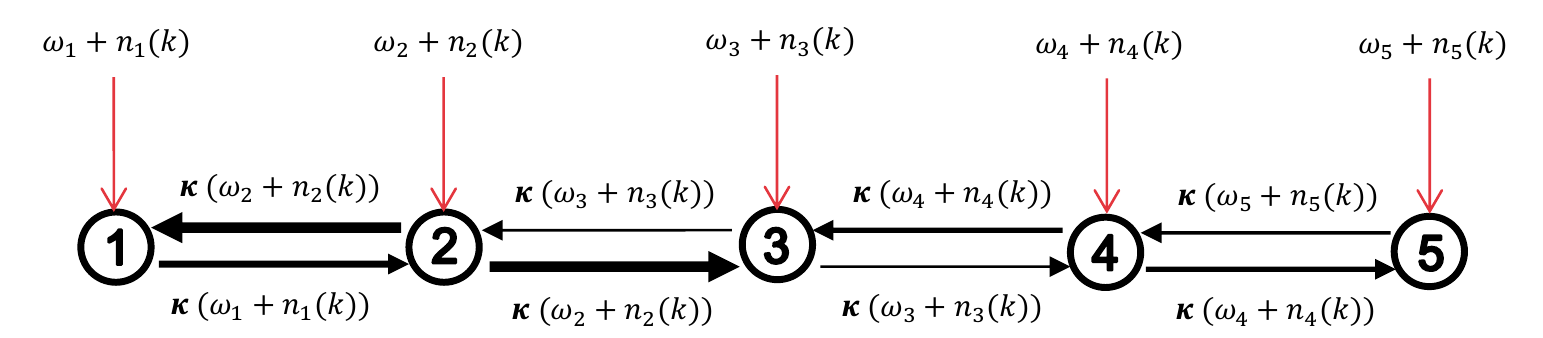}\vspace{-2mm}
\caption{Frequency-dependent line network.}\label{g1}
\end{figure}
We first simulate the network in Figure \ref{g1} assuming random variables with zero mean. The value of $\kappa$ and $\tau$ are calculated based on \eqref{eq:kappa}. The values of ${\boldsymbol E} [\lambda_{\min}(B^T {\underaccent{\sim}{\boldsymbol \omega}} B)]=1.197$ and ${\boldsymbol E}[\lambda_{\max}(B^T {\underaccent{\sim}{\boldsymbol \omega}} B)]=25.35$ are computed numerically. We obtain $\kappa > 8.16$. We then adjust $\kappa=30$ and calculate the upper-bound of the sampling period which is $\tau(\kappa) <0.004$. We adjust $\tau=2$ ms and $\kappa=30$.\\ Figure \ref{p1} shows the plots of $\omega_i+n_i(\mathrm{k})$ for nodes $2,3,4$, together with the plot of relative phases and $\lambda_{\min}(B^T {\underaccent{\sim}{\boldsymbol \omega}} B)$. Notice that the weight of edge $e_{2 \rightarrow 3}= \kappa (\omega_2+n_2(\mathrm{k}))$, while $e_{3 \rightarrow 2}= e_{3 \rightarrow 4}=\kappa (\omega_3+n_3(\mathrm{k}))$ and $e_{4 \rightarrow 3}=\kappa (\omega_4+n_4(\mathrm{k}))$. As shown the relative phases stay within the desired set $S^{G}(\gamma)$ ($\gamma$ is chosen very close to $\frac{\pi}{2}$).

We then, change the mean-value of $n_3$ to $-1.6$. The result, shown in Figure \ref{p2}, is similar to Figure \ref{p1}. Notice $\omega_3=1 < |-1.6|$. In this case, ${\boldsymbol E} [\lambda_{\min}(B^T {\underaccent{\sim}{\boldsymbol \omega}} B)]$ is positive. Notice that the latter is positive despite having a negative weight on one of the edges.  

We then decrease the mean-value of $n_3$ to $-3$. The result is shown in Figure \ref{p3}. As shown, the trajectories exit the set $S^{G}(\gamma)$. For this case, ${\boldsymbol E} [\lambda_{\min}(B^T {\underaccent{\sim}{\boldsymbol \omega}} B)]$ is negative. Table I shows the value of minimum and maximum expectations of the eigenvalues of $B^T {\underaccent{\sim}{\boldsymbol \omega}} B$ for four cases: no random disturbance, Gaussian random disturbance with zero mean for all nodes (Fig. \ref{p1}), Gaussian random disturbance with zero mean for nodes $1,2,4,5$ and negative mean for node 3 (Fig.\ref{p2}, Fig. \ref{p3}). 
\begin{figure}[h]
\centering
\includegraphics[scale=0.38]{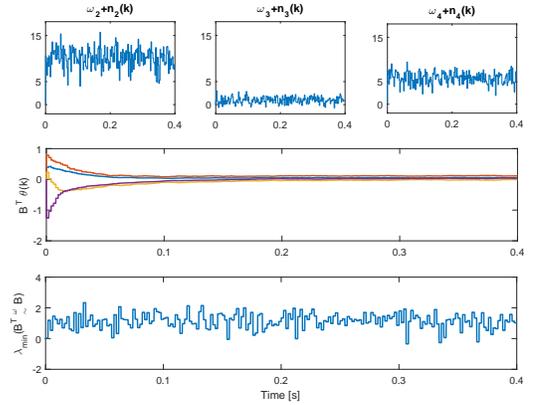}\vspace{-2mm}
\caption{Frequency-dependent network with zero-mean Gaussian noise.}\label{p1}
\end{figure}

\begin{figure}[h]
\centering
\includegraphics[scale=0.38]{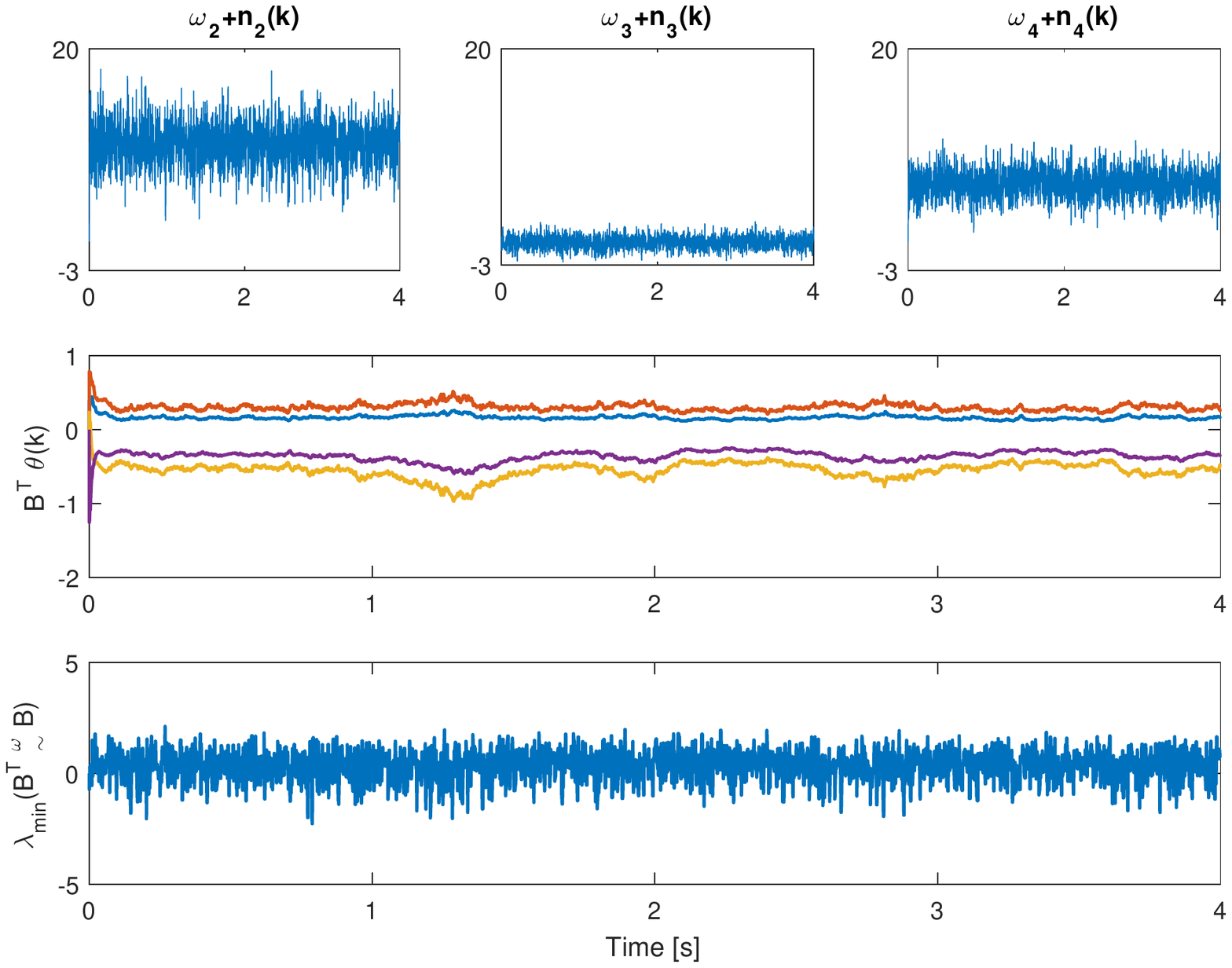}\vspace{-2mm}
\caption{Frequency-dependent network with zero-mean Gaussian noise for nodes $1,2,4,5$ and mean-value $-1.6$ for $n_3$.}\label{p2}
\end{figure}
\begin{figure}[h]
\centering
\includegraphics[scale=0.38]{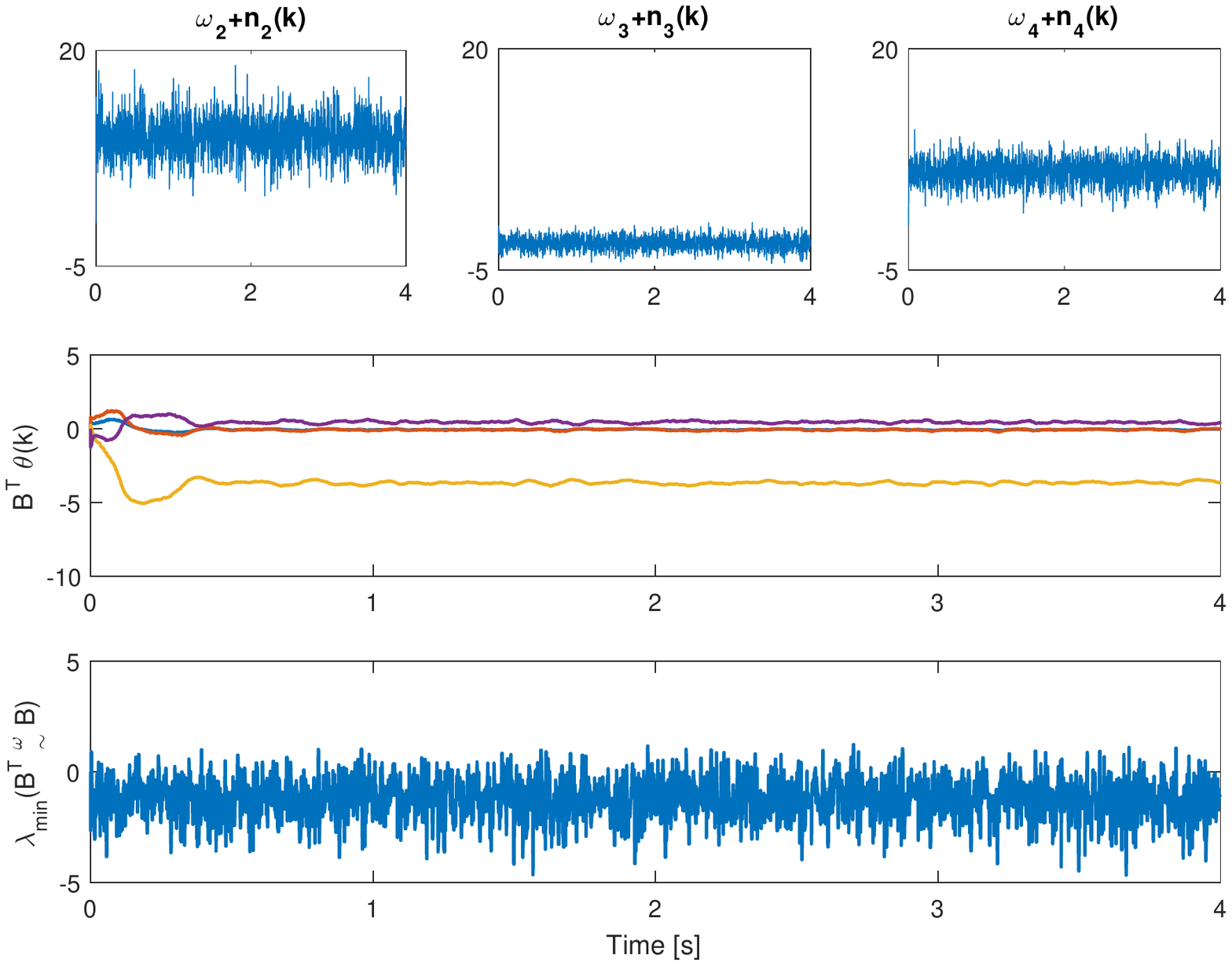}\vspace{-2mm}
\caption{Frequency-dependent network with zero-mean Gaussian noise for nodes $1,2,4,5$ and mean-value $-3$ for $n_3$.}\label{p3}
\end{figure}

\begin{center}
\begin{tabu} to 0.45\textwidth { | X[l] | X[c] | X[c] | }
\hline
{\small Choice} & ${\small {\boldsymbol E} [\lambda_{\min}(B^T {\underaccent{\sim}{\boldsymbol \omega}} B)]}$ & ${\small {\boldsymbol E} [\lambda_{\max}(B^T {\underaccent{\sim}{\boldsymbol \omega}} B)]}$\\
\hline
Noise-free & 1.31  & 24.46\\
\hline
${\boldsymbol E}[n_i]=0, \forall i$ & 1.19  & 25.35\\
\hline
${\boldsymbol E}[n_3]=-1.6$ & 0.34  & 24.05\\
\hline
${\boldsymbol E}[n_3]=-3$ & -1.17  & 23.669\\
\hline
\end{tabu}\\[1mm]
\small{Table 1}\label{t1}
\end{center}

Finally, we examine the result of Corollary \ref{co2}, and simulate a network as shown in Fig. \ref{g2} with constant weights. We consider Gaussian distributions with zero mean for nodes $1,4,5$, negative mean values $-20,-2$ for nodes $2,3$, respectively. The variances, $\kappa$ and $\tau$ are set as before. Notice that negative mean values do not change the result as long as the size of variations is considered in the calculation of $\kappa$ and $\tau$.
\begin{figure}[h]
\centering
\includegraphics[scale=0.5]{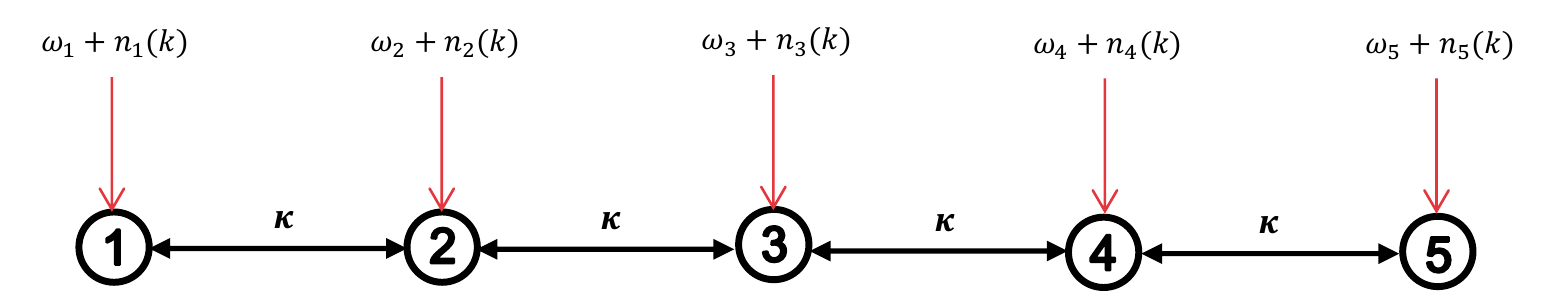}
\caption{Undirected line network.}\label{g2}
\end{figure}
\begin{figure}[h!]
\centering
\includegraphics[scale=0.38]{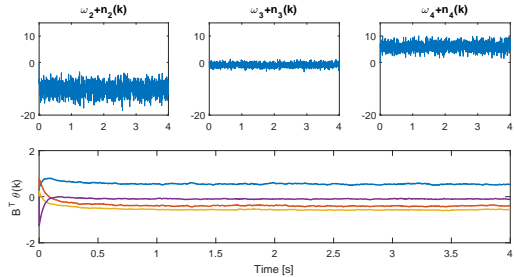}\vspace{-2mm}
\caption{Exogenous frequencies and relative phases for the undirected network.}
\label{p6}
\end{figure}
\section{Conclusions}\label{sec:cl}
This paper studied conditions under which stochastic phase-cohesiveness is achieved for a number of Kuramoto oscillators in a bidirectional network and an undirected network. The exogenous frequencies have been assumed to be combined with random disturbances representing uncertainties. For the bidirectional network, a sufficient coupling and a necessary sampling-time conditions have been obtained in order to prove a recurrent property for the process provided that the expected value of the minimum eigenvalue of the random weighted edge Laplacian is strictly positive. Similar results have been proved for the undirected network where no assumption on the weighted edge Laplacian is required. 
\bibliographystyle{ieeetr}
\bibliography{biblio}

\begin{thebibliography}{10}

\bibitem{kuramoto2012chemical}
Y.~Kuramoto, {\em Chemical oscillations, waves, and turbulence}, vol.~19.
\newblock Springer Science \& Business Media, 2012.

\bibitem{strogatz2000kuramoto}
S.~Strogatz, ``From {K}uramoto to {C}rawford: exploring the onset of
  synchronization in populations of coupled oscillators,'' {\em Physica D:
  Nonlinear Phenomena}, vol.~143, no.~1-4, pp.~1--20, 2000.

\bibitem{jadbabaie2004stability}
A.~Jadbabaie, N.~Motee, and M.~Barahona, ``On the stability of the {K}uramoto
  model of coupled nonlinear oscillators,'' in {\em Proceedings of the American
  Control Conference}, vol.~5, pp.~4296--4301, IEEE, 2004.

\bibitem{dorfler2014synchronization}
F.~D{\"o}rfler and F.~Bullo, ``Synchronization in complex networks of phase
  oscillators: A survey,'' {\em Automatica}, vol.~50, no.~6, pp.~1539--1564,
  2014.

\bibitem{mallada2013synchronization}
E.~Mallada and A.~Tang, ``Synchronization of weakly coupled oscillators:
  coupling, delay and topology,'' {\em Journal of Physics A: Mathematical and
  Theoretical}, vol.~46, no.~50, p.~505101, 2013.

\bibitem{jafarpour2018synchronization}
S.~Jafarpour and F.~Bullo, ``Synchronization of {K}uramoto oscillators via
  cutset projections,'' {\em IEEE Transactions on Automatic Control}, 2018.

\bibitem{scardovi2007synchronization}
L.~Scardovi, A.~Sarlette, and R.~Sepulchre, ``Synchronization and balancing on
  the n-torus,'' {\em Systems \& Control Letters}, vol.~56, no.~5,
  pp.~335--341, 2007.

\bibitem{franci2010phase}
A.~Franci, A.~Chaillet, and W.~Pasillas-L{\'e}pine, ``Phase-locking between
  {K}uramoto oscillators: robustness to time-varying natural frequencies,'' in
  {\em Decision and Control (CDC), 2010 49th IEEE Conference on},
  pp.~1587--1592, IEEE, 2010.

\bibitem{gomez2011explosive}
J.~G{\'o}mez-Gardenes, S.~G{\'o}mez, A.~Arenas, and Y.~Moreno, ``Explosive
  synchronization transitions in scale-free networks,'' {\em Physical review
  letters}, vol.~106, no.~12, p.~128701, 2011.

\bibitem{zhang2013explosive}
X.~Zhang, X.~Hu, J.~Kurths, and Z.~Liu, ``Explosive synchronization in a
  general complex network,'' {\em Physical Review E}, vol.~88, no.~1,
  p.~010802, 2013.

\bibitem{klein2008integration}
D.~Klein, P.~Lee, K.~Morgansen, and T.~Javidi, ``Integration of communication
  and control using discrete time {K}uramoto models for multivehicle
  coordination over broadcast networks,'' {\em IEEE Journal on Selected Areas
  in Communications}, vol.~26, no.~4, 2008.

\bibitem{cenedese2016multi}
A.~Cenedese, C.~Favaretto, and G.~Occioni, ``Multi-agent swarm control through
  {K}uramoto modeling,'' in {\em Decision and Control (CDC), 2016 IEEE 55th
  Conference on}, pp.~1820--1825, IEEE, 2016.

\bibitem{acebron2005kuramoto}
J.~Acebr{\'o}n, L.~Bonilla, C.~Vicente, F.~Ritort, and R.~Spigler, ``The
  {K}uramoto model: A simple paradigm for synchronization phenomena,'' {\em
  Reviews of modern physics}, vol.~77, no.~1, p.~137, 2005.

\bibitem{bag2007influence}
B.~Bag, K.~Petrosyan, and C.~Hu, ``Influence of noise on the synchronization of
  the stochastic {K}uramoto model,'' {\em Physical Review E}, vol.~76, no.~5,
  p.~056210, 2007.

\bibitem{breakspear2010generative}
M.~Breakspear, S.~Heitmann, and A.~Daffertshofer, ``Generative models of
  cortical oscillations: neurobiological implications of the {K}uramoto
  model,'' {\em Frontiers in human neuroscience}, vol.~4, p.~190, 2010.

\bibitem{jorg2017stochastic}
D.~J{\"o}rg, ``Stochastic {K}uramoto oscillators with discrete phase states,''
  {\em Physical Review E}, vol.~96, no.~3, p.~032201, 2017.

\bibitem{cumin2007generalising}
D.~Cumin and C.~Unsworth, ``Generalising the {K}uramoto model for the study of
  neuronal synchronisation in the brain,'' {\em Physica D: Nonlinear
  Phenomena}, vol.~226, no.~2, pp.~181--196, 2007.

\bibitem{izhikevich2007dynamical}
E.~Izhikevich, {\em Dynamical systems in neuroscience}.
\newblock MIT press, 2007.

\bibitem{richardson2005synaptic}
M.~Richardson and W.~Gerstner, ``Synaptic shot noise and conductance
  fluctuations affect the membrane voltage with equal significance,'' {\em
  Neural computation}, vol.~17, no.~4, pp.~923--947, 2005.

\bibitem{stam2014trees}
C.~Stam, P.~Tewarie, E.~Van~Dellen, E.~Van~Straaten, A.~Hillebrand, and
  P.~Van~Mieghem, ``The trees and the forest: characterization of complex brain
  networks with minimum spanning trees,'' {\em International Journal of
  Psychophysiology}, vol.~92, no.~3, pp.~129--138, 2014.

\bibitem{jafarian2018sync}
M.~Jafarian, X.~Yi, M.~Pirani, H.~Sandberg, and K.~H. Johansson,
  ``Synchronization of {K}uramoto oscillators in a bidirectional
  frequency-dependent tree network.,'' in {\em Proceedings of the 57th IEEE
  Conference on Decision and Control (CDC)}, IEEE, 2018.

\bibitem{mesbahi2010graph}
M.~Mesbahi and M.~Egerstedt, {\em Graph theoretic methods in multiagent
  networks}.
\newblock Princeton University Press, 2010.

\bibitem{meyn2012markov}
S.~Meyn and R.~Tweedie, {\em Markov chains and stochastic stability}.
\newblock Springer Science \& Business Media, 2012.

\bibitem{meyn1994}
S.~Meyn and R.~L. Tweedie, ``State-dependent criteria for convergence of markov
  chains,'' {\em Ann. Appl. Probab.}, vol.~4, pp.~149--168, 02 1994.

\end{thebibliography}
\end{document}